\documentclass[10pt,twocolumn]{article}

\usepackage[margin=0.75in,columnsep=0.3in]{geometry}

\usepackage[utf8]{inputenc}
\usepackage[T1]{fontenc}
\usepackage{times}
\usepackage{microtype}

\usepackage{amsmath,amssymb,amsthm}

\usepackage{graphicx}
\usepackage{booktabs}
\usepackage{multirow}
\usepackage{colortbl}
\usepackage{xcolor}
\usepackage{float}

\usepackage{algorithm}
\usepackage{algpseudocode}

\usepackage[colorlinks=true,linkcolor=blue!60!black,citecolor=green!50!black,urlcolor=blue!70!black]{hyperref}
\usepackage[numbers,sort&compress]{natbib}

\usepackage{enumitem}
\usepackage{caption}
\usepackage{subcaption}
\usepackage{wrapfig}
\usepackage{titlesec}

\titlespacing*{\section}{0pt}{10pt}{5pt}
\titlespacing*{\subsection}{0pt}{8pt}{4pt}
\titlespacing*{\subsubsection}{0pt}{6pt}{3pt}

\newtheorem{theorem}{Theorem}

\newtheorem{proposition}[theorem]{Proposition}
\newtheorem{corollary}[theorem]{Corollary}
\theoremstyle{definition}
\newtheorem{definition}{Definition}
\theoremstyle{remark}
\newtheorem{remark}{Remark}

\newcommand{\E}{\mathbb{E}}

\newcommand{\floor}[1]{\lfloor #1 \rfloor}

\DeclareMathOperator{\LSB}{LSB}
\DeclareMathOperator{\Round}{Round}

\definecolor{lightblue}{RGB}{230,240,255}
\definecolor{lightgreen}{RGB}{230,255,235}
\definecolor{lightyellow}{RGB}{255,252,230}

\title{\Large\bfseries P-Cast Precision in FP8 Attention:\\Sink-Induced Collapse and the Optimality of $S = 2^8$}

\author{Reed Lau\\
\textit{Tencent}\\
\texttt{reedlauliu@tencent.com}
}
\date{}

\begin{document}
\maketitle
\thispagestyle{empty}

\begin{abstract}
\noindent FP8 (E4M3) acceleration for attention computation offers significant throughput gains, but the 3-bit mantissa introduces precision challenges when the softmax probability matrix~$P$ is cast to FP8 before the $P \cdot V$ matrix multiplication.
We analyze two implementation choices that affect output precision under the \emph{Attention Sink} phenomenon: (1)~the KV block iteration order, and (2)~the static scaling factor applied to~$P$ before casting.

We show that forward KV iteration causes \emph{P-collapse}---to leading order a fraction $\Phi(\Delta + \delta_k - 6.93 - \ln S)$ of non-sink $P$ values underflow to zero, where the small shift $\delta_k \approx 1$ (for $k_{\text{sink}}{=}4$) is the expected within-sink-block score maximum---and that reverse iteration removes it, with a zero-underflow guarantee when reverse is combined with $S{=}256$.
We further give a constructive characterization of $S = 256 = 2^8$ as the static scale that simultaneously satisfies (i)~bit-exact IEEE 754 scaling, (ii)~the lower envelope of a sawtooth function $dp(S)$ over the E4M3 number line ($dp = 2^{-4}$, the minimum worst-case quantization step), and (iii)~the maximum normal-range coverage \emph{among bit-exact ($2^k$) scales} (a non-bit-exact scale such as $448$ attains slightly higher coverage; \S\ref{sec:scale}).
Both optimizations are already deployed in FlashAttention-3/4 on engineering grounds; our contribution is a quantitative account of \emph{why} these choices are good and a closed-form threshold $\Delta_c = 6.93 + \ln S - \delta_k$ for predicting kernel-level precision loss.
Kernel-faithful experiments ($Q, K, V$ in FP32 to isolate the P-cast effect) show $3$--$10\times$ MSE improvement at moderate sink strengths, and paired tests confirm both fixes saturate to the same precision floor when combined---which motivated updating the hpc-ops kernel from $S{=}1$ to $S{=}256$.
\end{abstract}

\vspace{-2mm}
\section{Introduction}
\label{sec:intro}

\subsection{Motivation}

The relentless scaling of large language models has made low-precision arithmetic essential for both training and inference throughput.
Modern GPU architectures (NVIDIA Hopper, Blackwell; AMD MI300) now provide native FP8 tensor core support, operating on two formats: E4M3 (4-bit exponent, 3-bit mantissa) for forward computation and E5M2 for gradients~\cite{micikevicius2022fp8}.

FlashAttention-3~\cite{shah2024flashattention3} exploits this by casting the softmax output matrix~$P$ to E4M3 before the $P \cdot V$ matmul, while keeping the output accumulator in FP32.
This design creates a \emph{precision bottleneck at the P-cast step}: E4M3's 3-bit mantissa provides only 8 representable values per binade, giving a relative precision of just $12.5\%$---roughly $16\times$ worse than BF16.

In this regime, two implementation ``details'' that are inconsequential in higher precision become first-order precision determinants:
\begin{enumerate}[leftmargin=*,itemsep=1pt]
\item \textbf{KV block iteration order}: whether the online softmax processes KV blocks forward ($0 \to N$) or reverse ($N \to 0$).
\item \textbf{P-scaling factor}: the constant $S$ by which $P$ is multiplied before the E4M3 cast (and divided back in the epilogue).
\end{enumerate}

\subsection{The Attention Sink Problem}

The \emph{Attention Sink}~\cite{xiao2023efficient} phenomenon---where initial tokens receive disproportionately large attention weights---interacts destructively with FP8 quantization.
Under forward iteration, the sink's high logit score inflates the running softmax maximum $m$, forcing all subsequent $P$ values below E4M3's representable range.
This \emph{P-collapse} is a threshold effect: it activates around sink strength $\Delta \approx 6$--$7$, where the cast zeroes the majority of non-sink $P$ values while those positions still carry roughly half of the total probability mass.

\subsection{Contributions}

This paper provides a quantitative account of both implementation choices:

\begin{enumerate}[leftmargin=*,itemsep=1pt]
\item \textbf{P-collapse quantification} (\S\ref{sec:collapse}): We derive a closed-form expression $F(\Delta, S) = \Phi(\Delta + \delta_k - 6.93 - \ln S)$ (with $\delta_k$ a small within-sink extreme-value correction) for the fraction of non-sink P values that underflow, and a leading-order MSE estimate showing the effect peaks in a narrow transition region.

\item \textbf{Reverse iteration sufficiency} (\S\ref{sec:reverse}): We show that reverse iteration (rigorously, combined with $S{=}256$) keeps all P values representable for any practical sequence length, with an explicit probabilistic guarantee.

\item \textbf{$S = 256$ characterization} (\S\ref{sec:scale}): We introduce the $dp(S)$ function---the normalized maximum quantization step---and show it forms a sawtooth with power-of-two values on its lower envelope.
This yields a constructive characterization of $S = 256$ via three jointly imposed conditions (bit-exactness, minimum quantization step, maximum normal coverage).

\item \textbf{Kernel-faithful validation} (\S\ref{sec:experiments}): Using a simulation that exactly matches production kernel semantics (gSum from FP32 pre-cast P), we measure $3$--$10\times$ MSE improvement and observe saturation when both fixes are applied.
\end{enumerate}

\subsection{Positioning}

Both optimizations are already deployed in FlashAttention-3/4~\cite{shah2024flashattention3} on engineering grounds (register savings, range utilization).
This paper is therefore not a proposal of new techniques, but a quantitative explanation of \emph{why} those choices work and a closed-form diagnostic ($\Delta_c = 6.93 + \ln S - \delta_k$) that practitioners can apply to predict P-cast failure regimes.
The same analysis motivated updating the hpc-ops kernel from $S = 1$ to $S = 256$ and could inform corresponding changes in FlashInfer ($S = 448$) and TensorRT-LLM XQA ($S = 448$).

\section{Background and Related Work}
\label{sec:background}

\subsection{FP8 E4M3 Number Format}
\label{sec:e4m3}

The E4M3 format~\cite{micikevicius2022fp8} allocates 1 sign bit, 4 exponent bits (bias $b = 7$), and 3 mantissa bits.
The complete positive representable set contains 126 values:

\vspace{1mm}
\noindent\textbf{Subnormals} (exponent field $E = 0$, 7 values):
\begin{equation}
v = 2^{1-b} \cdot \frac{M}{2^3} = 2^{-6} \cdot \frac{M}{8}, \quad M \in \{1, \ldots, 7\}
\end{equation}
spanning $[2^{-9}, 7 \cdot 2^{-9}] = [0.00195, 0.01367]$.

\vspace{1mm}
\noindent\textbf{Normals} (exponent field $E = 1, \ldots, 15$; 119 values):
\begin{equation}
v = 2^{E-b} \cdot \left(1 + \frac{M}{2^3}\right), \quad M \in \{0, \ldots, 7\}
\end{equation}
with $E = 15, M = 7$ reserved as NaN, giving max $= 448 = 1.75 \times 2^8$.

\vspace{1mm}
\noindent\textbf{Key thresholds for P quantization}:
\begin{center}
\footnotesize
\begin{tabular}{@{}lrl@{}}
\toprule
Quantity & Value & Significance \\
\midrule
Max representable & 448 & Overflow/saturation \\
Min normal & $2^{-6}$ & Below: subnormal \\
Min subnormal & $2^{-9}$ & Smallest nonzero \\
Round-to-zero & $2^{-10}$ & Below: casts to 0 \\
\bottomrule
\end{tabular}
\end{center}

Within any normal binade $[2^n, 2^{n+1})$, the spacing (LSB) is $2^{n-3}$, yielding exactly 8 uniformly spaced representable values.
The subnormal region has uniform spacing $2^{-9}$ with only 7 values---much coarser relative precision.

\subsection{Online Softmax and FP8 FlashAttention}
\label{sec:online_softmax}

FlashAttention~\cite{dao2022flashattention,dao2023flashattention2} computes exact attention via tiled online softmax.
Algorithm~\ref{alg:fp8attn} shows the FP8 variant, highlighting the P-cast step and the critical separation between $\ell$ (FP32 pre-cast) and $O$ (from cast P).

\begin{algorithm}[t]
\caption{FP8 Online Softmax Attention (single query row)}
\label{alg:fp8attn}
\small
\begin{algorithmic}[1]
\Require $Q \in \mathbb{R}^{1 \times d}$, $K, V \in \mathbb{R}^{N \times d}$ (FP8), scale $S$, block size $B$
\State $m \gets -\infty$, $\ell \gets 0$, $O \gets \mathbf{0} \in \mathbb{R}^{1 \times d}$ \Comment{FP32}
\For{$j \in \text{BlockOrder}$} \Comment{Forward: $0..N/B$; Reverse: $N/B..0$}
    \State $\mathbf{Z} \gets Q \cdot K_j^T / \sqrt{d_k}$ \Comment{FP8$\times$FP8$\to$FP32 accumulator (scores)}
    \State $m_{\text{loc}} \gets \max(\mathbf{Z})$
    \State $m_{\text{new}} \gets \max(m,\, m_{\text{loc}})$
    \State $\alpha \gets \exp(m - m_{\text{new}})$ \Comment{FP32 correction}
    \State $P \gets \exp(\mathbf{Z} - m_{\text{new}})$ \Comment{FP32 local prob.}
    \State $\ell \gets \alpha \cdot \ell + \textbf{sum}(P)$ \Comment{\colorbox{lightgreen}{FP32 pre-cast P}}
    \State $P_{\text{fp8}} \gets \text{cast\_E4M3}(P \cdot S)$ \Comment{\colorbox{lightyellow}{P-cast; $\times S$ is FP32, bit-exact for $S{=}2^k$}}
    \State $O \gets \alpha \cdot O + P_{\text{fp8}} \cdot V_j$ \Comment{FP8$\times$FP8$\to$FP32 accum.}
\EndFor
\State \Return $O / (S \cdot \ell)$ \Comment{Epilogue: unscale + normalize}
\end{algorithmic}
\end{algorithm}

\noindent The key design: $\ell$ (line 8) uses exact FP32 probabilities, while $O$ (line 10) uses the cast E4M3 values.
This means \emph{normalization is always exact}; precision loss appears only in the numerator.

\subsection{Attention Sink}

Multiple studies (see, e.g.,~\cite{xiao2023efficient,sun2024massive,barbero2025whyllm,gu2024when}) document that pretrained LLMs allocate disproportionate attention to initial tokens (``sink tokens''), with sink-vs-normal logit gap $\Delta$ typically reported in the range $[6, 13]$ at context lengths of several thousand.
Training-side mitigations exist (learnable sink tokens, clipped softmax), but we focus on kernel-level solutions applicable to already-trained models.

\subsection{Related Work}

\textbf{FP8 attention kernels.}
FlashAttention-3~\cite{shah2024flashattention3} introduced E4M3 P-casting with $S = 256$ and reverse iteration on Hopper (SM90); FlashAttention-4 extends the same choices to Blackwell (SM100).
FlashInfer adopts $S = 448$ (matching $\max_{\text{E4M3}}$).
SageAttention2~\cite{zhang2024sageattention2} uses per-block $S = 448$; SageAttention2++~\cite{zhang2025sageattention2pp} constrains $S = 112$ for FP16 accumulation.

\textbf{FP8 quantization theory.}
\citet{micikevicius2022fp8} introduced the E4M3/E5M2 split.
Per-tensor vs.\ per-channel scaling is well studied for weights/activations, but the specific structure of post-softmax $P \in [0,1]$ quantization has not received formal treatment.

\textbf{Attention sink analysis.}
\citet{xiao2023efficient} identified the phenomenon; \citet{sun2024massive} linked it to massive activations; \citet{gu2024when} empirically measured sink strength distributions.
None analyzed the interaction with FP8 P-casting.

\section{P-Collapse Under Attention Sink}
\label{sec:collapse}

\subsection{Setup and Notation}

We consider a single attention head with query length $q$, KV length $N$, and head dimension $d$.
KV blocks have size $B$ (typically 64 or 128).
The first $k_{\text{sink}}$ positions are sink tokens with logit scores $\Delta$ above the mean.
Non-sink scores follow $s \sim \mathcal{N}(0, 1)$ (standard for well-trained transformers with $1/\sqrt{d_k}$ scaling).

\subsection{Forward Iteration Failure Mode}

In forward iteration, block 0 contains the sink tokens.
After processing block 0, the running maximum is set by the largest sink-token logit. Writing the sink scores as $\Delta$ plus the same $\mathcal{N}(0,1)$ fluctuation carried by other tokens,
\begin{equation}
m_{\text{global}} = \Delta + \delta_k, \qquad \delta_k \triangleq \E\big[\max\nolimits_{i \le k_{\text{sink}}} s_i\big],
\end{equation}
where $\delta_k$ is the expected maximum of $k_{\text{sink}}$ standard Gaussians ($\delta_4 \approx 1.03$; the asymptotic $\sqrt{2\ln k_{\text{sink}}} \approx 1.67$ badly \emph{over}estimates $\delta_k$ at the small $k_{\text{sink}}$ of interest, and we use the exact value).

For all subsequent blocks $j > 0$, the local probability values are:
\begin{equation}
P_j(i) = \exp\big(s_i - m_{\text{global}}\big) = \exp\big(s_i - \Delta - \delta_k\big).
\end{equation}

\begin{proposition}[P-underflow condition]
\label{prop:underflow}
A P value $p$ underflows to zero in E4M3 (with scale $S$) iff:
\begin{equation}
p \cdot S < 2^{-10}
\end{equation}
For $P_j(i) = \exp(s_i - \Delta - \delta_k)$ with scale $S$, this occurs when:
\begin{equation}
s_i < \Delta + \delta_k - 10\ln 2 - \ln S = \Delta + \delta_k - 6.93 - \ln S
\end{equation}
\end{proposition}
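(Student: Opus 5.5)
The plan has two parts. First I establish the underflow threshold $p\cdot S<2^{-10}$ from the E4M3 rounding rule. Then I convert it into the condition on $s_i$ by a monotone change of variables.

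\textbf{Part 1: the threshold.} Write $x = p\cdot S \ge 0$. The multiplication is performed in FP32 and is bit-exact for $S=2^k$; for other $S$ the FP32 rounding error is far below the E4M3 grid, so I treat $x$ as exact. Under round-to-nearest-even, the relevant part of the E4M3 grid near the origin is $\{0, 2^{-9}, 2\cdot 2^{-9},\dots\}$, as recorded in \S\ref{sec:e4m3}. The midpoint between $0$ and the smallest subnormal $2^{-9}$ is $2^{-10}$. Hence:
\begin{itemize}[leftmargin=*,itemsep=1pt]
\item if $x<2^{-10}$, then $x$ is strictly closer to $0$ and casts to zero;
\item if $x>2^{-10}$, then $x$ rounds to $2^{-9}$ or to something larger, which is nonzero.
\end{itemize}
The boundary case $x=2^{-10}$ is a tie. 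Ties-to-even sends it to $0$, because the mantissa of $0$ is even while that of $2^{-9}$ ($M=1$) is odd. This tie is a measure-zero event for continuous scores, so I will state the "iff" with strict inequality and note that equality also maps to zero. If one wanted to be pedantic, the condition becomes $x\le 2^{-10}$; this does not change the threshold in $s_i$.

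\textbf{Part 2: transferring to scores.} Substitute $p = P_j(i) = \exp(s_i-\Delta-\delta_k)$, which is the forward-iteration value derived just before the proposition for blocks $j>0$ with $m_{\text{global}}=\Delta+\delta_k$. Since $S>0$ and $\ln$ is strictly increasing, the condition
\[
\exp(s_i-\Delta-\delta_k)\,S<2^{-10}
\]
is equivalent to
\[
s_i-\Delta-\delta_k+\ln S< -10\ln 2.
\]
Rearranging gives $s_i<\Delta+\delta_k-10\ln2-\ln S$, and $10\ln 2\approx 6.931$ gives the stated constant.

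\textbf{Main obstacle.} The algebra is trivial. The only delicate points are, first, justifying that the relevant rounding mode is round-to-nearest (the hardware \texttt{cvt} to E4M3 with \texttt{satfinite}), as opposed to truncation, which would move the threshold to $2^{-9}$. Second, I should be explicit about the modelling step: $m_{\text{global}}$ has been replaced by its leading-order value $\Delta+\delta_k$. The proposition is therefore exact conditional on the realized running max. Read with the expected maximum, it becomes the leading-order statement used later for $F(\Delta,S)$.
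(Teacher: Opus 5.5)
Your proof is correct and matches the paper's implicit argument. The round-to-zero boundary $2^{-10}$ is the midpoint between $0$ and the smallest subnormal $2^{-9}$ under round-to-nearest, and the condition on $s_i$ follows by taking logarithms, exactly as spelled out in the proof of Theorem~\ref{thm:reverse}. Your tie-breaking remark is a sound sharpening the paper omits: under ties-to-even the exact condition is $p\cdot S \le 2^{-10}$, so the stated strict-inequality ``iff'' fails only at that single measure-zero point and leaves the threshold on $s_i$ unchanged.
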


\subsection{Underflow Fraction: Closed Form}

For $s \sim \mathcal{N}(0,1)$:

\begin{corollary}[P-collapse fraction]
\label{thm:collapse}
To leading order, the fraction of non-sink $P$ values that underflow to zero under forward iteration with scale $S$ is:
\begin{equation}
\label{eq:F}
F(\Delta, S) = \Phi\big(\Delta + \delta_k - 6.93 - \ln S\big)
\end{equation}
where $\Phi$ is the standard normal CDF, $6.93 = 10\ln 2$, and $\delta_k$ is the within-sink extreme-value shift of \S\ref{sec:collapse} (the naive $\delta_k = 0$ form is a lower bound on the realized collapse).
\end{corollary}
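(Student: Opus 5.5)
The corollary follows directly from Proposition~\ref{prop:underflow} together with the Gaussian model for non-sink scores. Under forward iteration, once block~0 has been processed the running maximum satisfies $m \ge m_{\text{global}}$. To leading order we replace the realized block-0 maximum by its expectation $\Delta+\delta_k$, and we neglect later increases of $m$. Those increases require a non-sink score exceeding $\Delta+\delta_k$, an event of probability $\bar\Phi(\Delta+\delta_k)$ per token, which is negligible in the regime $\Delta\gtrsim 6$ of interest. With this replacement every later block uses $P_j(i)=\exp(s_i-\Delta-\delta_k)$. The FP32 multiplication by $S$ is exact, so $P_j(i)$ casts to zero exactly when $s_i<\tau$, where $\tau\triangleq\Delta+\delta_k-10\ln2-\ln S$; this is the threshold of Proposition~\ref{prop:underflow}.

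The fraction of non-sink values that underflow is therefore the empirical frequency of the event $\{s_i<\tau\}$. The $s_i$ are i.i.d.\ $\mathcal{N}(0,1)$, so its expectation is $\Pr[s<\tau]=\Phi(\tau)$, which is \eqref{eq:F}. By the law of large numbers, the realized fraction over the roughly $N-k_{\text{sink}}$ non-sink positions concentrates around this value with fluctuations of order $N^{-1/2}$.

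Two points need care. First, the non-sink tokens inside block~0 itself also see $m=m_{\text{global}}$, because the block maximum is taken before exponentiation, so they obey the same criterion and the formula covers them uniformly. Second, the round-to-nearest tie at exactly $2^{-10}$ has measure zero and does not matter.

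For the lower-bound remark, take the realized maximum $M=\max_{i\le k_{\text{sink}}}(\Delta+s_i)$. Every individual sink score is at most $M$, so $M\ge \Delta+s_1$; averaging over $s_1$ gives $\E[M]\ge\Delta$, which is the statement $\delta_k\ge0$ (and $\delta_k$ is strictly positive for $k_{\text{sink}}\ge 2$). Since $\Phi$ is increasing, $\Phi(\Delta-6.93-\ln S)\le\Phi(\Delta+\delta_k-6.93-\ln S)$, so the $\delta_k=0$ form is a lower bound on the leading-order collapse fraction.

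The step I expect to be the main obstacle is justifying the substitution of $\E[M]$ for the random $M$, which is why the result is stated only to leading order. The exact collapse fraction is $\E_M[\Phi(M-10\ln 2-\ln S)]$, and $\Phi$ is neither convex nor concave globally, so Jensen's inequality alone does not fix the sign of the error. My first attempt would be a second-order Taylor bound: the error is at most $\tfrac12\sup|\phi'|\operatorname{Var}(M)$, and $\operatorname{Var}(M)$ for the maximum of four standard Gaussians is about $0.49$, giving an $O(\operatorname{Var}M)$ correction. This correction is small compared with the width of the transition region, which is all that ``leading order'' asserts. Note that this argument bounds the size of the error but does not establish the realized lower bound itself, since Jensen does not apply.
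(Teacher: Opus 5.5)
Your proposal is correct and follows the same route as the paper. Both apply the threshold of Proposition~\ref{prop:underflow} with the per-row maximum replaced by its mean $\Delta+\delta_k$, then read off $\Pr[s<\tau]=\Phi(\tau)$ for Gaussian scores.

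The one real divergence is the spread of $m_{\text{global}}$. The paper asserts that lower-tail convexity of $\Phi$ makes the realized fraction higher than the estimate. You instead note that $\Phi$ changes curvature at $0$ and only bound the error by $\tfrac12\phi(1)\,\mathrm{Var}(M)\approx 0.06$. Your caution is the accurate one: the paper's own Table~\ref{tab:collapse} puts the realized fraction \emph{below} the estimate for positive arguments, e.g.\ $82.0\%$ vs.\ $\Phi(1.1)\approx 86.4\%$ at $\Delta=7$, $S=1$.
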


\begin{proof}
By Proposition~\ref{prop:underflow}, $P_j(i)$ underflows iff $s_i < \Delta + \delta_k - 10\ln 2 - \ln S$.
For $s_i \sim \mathcal{N}(0, 1)$, $\Pr[s_i < x] = \Phi(x)$, hence the underflow fraction equals $\Phi(\Delta + \delta_k - 6.93 - \ln S)$ (using $10\ln 2 = 6.9315$).
This treats the per-row $m_{\text{global}}$ as the fixed mean $\Delta + \delta_k$; since $m_{\text{global}}$ has nonzero spread across query rows and $\Phi$ is convex in its lower tail, the realized fraction is slightly \emph{higher} than this mean-shift estimate. Table~\ref{tab:collapse} therefore reports the exact simulated values, which Eq.~\eqref{eq:F} reproduces to within a few percentage points.
\end{proof}

Table~\ref{tab:collapse} evaluates this for $S = 1$ (direct cast) and $S = 256$:

\begin{table}[h]
\centering
\caption{P-collapse analysis ($N = 4096$, $k_{\text{sink}} = 4$, block size 64). \emph{Both} the $S{=}1$ and $S{=}256$ ``frac.\ zeroed'' columns are measured from the \emph{same} kernel-faithful forward simulation (averaged over 12 seeds), so they share one $m_{\text{global}} = \Delta + \delta_k$ convention. The measured shift is $\delta_4 \approx 1.0$, consistent with $\E[\max$ of 4 $\mathcal{N}(0,1)] \approx 1.03$ (and \emph{not} the asymptotic $\sqrt{2\ln 4} \approx 1.67$). ``Eff.\ info loss'' is non-sink mass $\times$ frac.\ zeroed ($S{=}1$). The closed form~\eqref{eq:F} reproduces these columns to within a few points; the residual is the per-row spread of $m_{\text{global}}$ (Corollary~\ref{thm:collapse} proof).}
\label{tab:collapse}
\vspace{1mm}
\small
\begin{tabular}{@{}c|cc|cc|c@{}}
\toprule
\multirow{2}{*}{$\Delta$} & \multicolumn{2}{c|}{Frac.\ zeroed $F(\Delta,S)$} & \multirow{2}{*}{\shortstack{Non-sink\\mass}} & \multirow{2}{*}{\shortstack{Eff.\ info\\loss}} \\
& $S{=}1$ & $S{=}256$ & & \\
\midrule
5 & 22.3\% & 0\% & 88.0\% & 19.6\% \\
6 & 51.6\% & 0\% & 74.0\% & 38.2\% \\
\rowcolor{lightyellow}
7 & \textbf{82.0\%} & 0\% & \textbf{51.7\%} & \textbf{42.4\%} \\
8 & 94.8\% & 0.3\% & 32.2\% & 30.5\% \\
9 & 99.5\% & 2.3\% & 13.9\% & 13.9\% \\
10 & $\approx$100\% & 11.7\% & 5.8\% & 5.8\% \\
12 & $\approx$100\% & 67.9\% & 0.8\% & 0.8\% \\
\bottomrule
\end{tabular}
\end{table}

The \emph{effective information loss} (non-sink mass $\times$ fraction zeroed) peaks at $\Delta \approx 6$--$7$ ($\sim$40\%), where positions carrying roughly half to three-quarters of the probability mass have most of their P values zeroed.

\subsection{Output MSE Bound}

\begin{proposition}[MSE from P-collapse]
\label{thm:mse}
Let $\mathcal{Z}$ be the set of positions whose P values underflow.
Assume $V_j$ are zero-mean random vectors with $\E[V_j V_j^T] = \sigma_V^2 I_d$ and pairwise uncorrelated across positions.
Then the expected per-dimension MSE of the kernel output vs.\ exact FP32 reference satisfies:
\begin{equation}
\E[\|\varepsilon\|^2/d] \;=\; \frac{\sigma_V^2}{\ell^2} \sum_{j \in \mathcal{Z}} P_j^2 \;+\; \frac{1}{d \ell^2} \sum_{j \neq k \in \mathcal{Z}} P_j P_k \, \mathrm{tr}\,\E[V_j V_k^T],
\end{equation}
where $\ell = \sum_{\text{all}} P_j$ is the (exact, FP32) running sum.
Under the pairwise-uncorrelated assumption the cross term vanishes, yielding
\begin{equation}
\mathrm{MSE}_{\mathrm{collapse}} \;=\; \frac{\sigma_V^2}{\ell^2} \sum_{j \in \mathcal{Z}} P_j^2.
\end{equation}
\end{proposition}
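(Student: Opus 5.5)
The plan is to write the kernel output and the exact reference explicitly, subtract them, and compute the second moment using only the covariance assumptions on $V$.

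\textbf{Step 1: the error vector.} By Algorithm~\ref{alg:fp8attn}, the kernel returns $\hat O = \frac{1}{S\ell}\sum_j \tilde P_j V_j$, where $\tilde P_j$ is the E4M3 cast of $S P_j$ after all $\alpha$-rescalings have been unrolled. The FP32 reference is $O^\star = \frac{1}{\ell}\sum_j P_j V_j$. The denominator $\ell$ is the same in both, because line~8 accumulates the pre-cast FP32 values.

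To isolate the collapse effect, I treat non-underflowing positions as cast exactly, i.e.\ $\tilde P_j = S P_j$ for $j\notin\mathcal{Z}$. For $j\in\mathcal{Z}$, Proposition~\ref{prop:underflow} gives $\tilde P_j = 0$. The factors $S$ then cancel, and
\begin{equation*}
\varepsilon = \hat O - O^\star = -\frac{1}{\ell}\sum_{j\in\mathcal{Z}} P_j V_j .
\end{equation*}
I will state this leading-order modelling assumption explicitly: ordinary rounding error on representable $P$ values is neglected, and the statement concerns only the collapse contribution.

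\textbf{Step 2: expand the norm.} Conditioning on the scores, so that $P_j$, $\ell$ and $\mathcal{Z}$ are fixed, the expansion is
\begin{equation*}
\|\varepsilon\|^2 = \frac{1}{\ell^2}\sum_{j,k\in\mathcal{Z}} P_jP_k\, V_j^TV_k .
\end{equation*}
Taking expectations and using $\E[V_j^TV_k]=\mathrm{tr}\,\E[V_kV_j^T]$, I split the sum into diagonal and off-diagonal terms.

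\textbf{Step 3: evaluate the terms.} The diagonal terms give $\mathrm{tr}(\sigma_V^2 I_d)=d\sigma_V^2$. Dividing by $d$ yields the first term of the stated identity, and the off-diagonal terms give the second. Under pairwise uncorrelatedness with zero mean, $\E[V_jV_k^T]=\E[V_j]\E[V_k]^T=0$ for $j\neq k$, so the cross term vanishes and $\mathrm{MSE}_{\mathrm{collapse}}$ follows.

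\textbf{Main obstacle.} The delicate point is justifying that $P$, $\ell$ and $\mathcal{Z}$ may be treated as constants with respect to the $V$ expectation. Scores depend on $K$, not $V$, so I will assume $V$ is independent of $(Q,K)$ and condition on the latter.

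A second subtlety is the rescaling factors $\alpha$. I need $P_j$ to denote the final globally-normalized weight $\exp(s_j-m_{\text{final}})$ after unrolling. In forward iteration with the sink in block~0, $m$ does not change after block~0 in the regime of interest, so $\alpha=1$ for the later blocks. In general I will absorb the $\alpha$ products into $P_j$, and hence into $\ell$, consistently in both the kernel output and the reference.
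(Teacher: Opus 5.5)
Your proposal is correct and follows the same route as the paper. Both identify the error as the missing contribution $\frac{1}{\ell}\sum_{j\in\mathcal{Z}}P_jV_j$, expand the squared norm, evaluate the diagonal terms as $d\sigma_V^2$, and drop the off-diagonal terms by zero mean plus uncorrelatedness. You also state three assumptions the paper's proof uses only implicitly: rounding on representable values is neglected, the scores are conditioned on via independence of $V$ from $(Q,K)$, and the $\alpha$ rescalings are unrolled into final weights.
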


\begin{proof}
The output error vector is $\varepsilon = \frac{1}{\ell}\sum_{j \in \mathcal{Z}} P_j V_j$ (the ``missing'' contribution).
Expanding the squared norm and taking expectations:
\begin{equation}
\E[\|\varepsilon\|^2] \;=\; \frac{1}{\ell^2} \sum_{j,k \in \mathcal{Z}} P_j P_k \, \E[V_j^T V_k].
\end{equation}
With $\E[V_j^T V_j] = d\sigma_V^2$ and the pairwise-uncorrelated assumption $\E[V_j^T V_k] = 0$ for $j \neq k$, dividing by $d$ gives the stated equality.
\qedhere
\end{proof}

\begin{remark}
The pairwise-uncorrelated assumption is a standard idealization; real $V_j$ are correlated across positions and channel-wise non-isotropic. The bound thus captures the leading $\mathcal{O}(\sigma_V^2)$ scaling and the $\sum_{j \in \mathcal{Z}} P_j^2$ dependence, but its absolute constant can shift by an $\mathcal{O}(1)$ factor (calibrated empirically; \S\ref{sec:experiments} confirms the shape and threshold).
\end{remark}

\begin{remark}
This MSE is maximized when $|\mathcal{Z}|$ is large \emph{and} $\sum P_j^2/\ell^2$ is non-negligible---which holds only in the transition region $\Delta \in [5, 9]$. At very large $\Delta$ the non-sink mass vanishes, so even total P-collapse barely affects the output, making $[5,9]$ the regime where this analysis matters most.
\end{remark}

\section{Optimization 1: Reverse KV Iteration}
\label{sec:reverse}

\subsection{Mechanism}

Reversing the iteration order to $(N{-}1, N{-}2, \ldots, 0)$ defers the sink block to the \emph{last} iteration.
During all preceding iterations, only non-sink tokens contribute to the running maximum:
\begin{equation}
m_{\text{pre-sink}} \leq \max_{i=1}^{N-k_{\text{sink}}} s_i \approx \sqrt{2 \ln(N - k_{\text{sink}})}
\end{equation}
by extreme value theory for Gaussian order statistics.
The P values during these iterations are:
\begin{equation}
P_j(i) = \exp(s_i - m_{\text{pre-sink}}) \in \big[\exp(-m_{\text{pre-sink}} - 3\sigma),\, 1\big]
\end{equation}
with high probability, where $\sigma = 1$ is the per-token score standard deviation and the lower endpoint uses the standard $3\sigma$ tail bound (violated with probability $\Phi(-3) \approx 1.3 \times 10^{-3}$).
For $N = 8192$, $\sqrt{2 \ln N} \approx 4.25$, giving $P_{\min} \approx \exp(-7.25) \approx 7 \times 10^{-4}$---comfortably above the $S{=}256$ round-to-zero boundary $2^{-10}/256 = 2^{-18} \approx 3.8 \times 10^{-6}$, and only marginally \emph{below} the bare $2^{-10}$ boundary (which is why $S{=}256$ adds a safety margin over reverse alone).

\subsection{Formal Sufficiency Condition}

\begin{theorem}[Zero-underflow guarantee for reverse + $S = 256$]
\label{thm:reverse}
In reverse iteration with $S = 256$, a P value $p = \exp(s - m)$ survives the E4M3 cast (i.e., $p \cdot 256 \geq 2^{-10}$) whenever:
\begin{equation}
s > m - 18\ln 2 = m - 12.48
\end{equation}
For $m \leq \sqrt{2\ln N} + \mathcal{O}(1)$ (pre-sink) with score $s \sim \mathcal{N}(0,1)$:
\begin{equation}
\Pr[\text{underflow}] = \Phi(m - 12.48) < \Phi(-7.2) < 10^{-12}
\end{equation}
for $N \leq 10^6$. Effectively zero P values underflow.
\end{theorem}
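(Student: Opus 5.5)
The proof has three parts: derive the survival threshold from Proposition~\ref{prop:underflow}, bound the running maximum $m$ in reverse iteration, and evaluate the Gaussian tail.

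\textbf{Threshold.} Proposition~\ref{prop:underflow} says $p$ survives the cast iff $p\cdot S \ge 2^{-10}$. With $S = 2^8$ this is $\exp(s-m) \ge 2^{-18}$. Taking logarithms gives $s \ge m - 18\ln 2$, and $18\ln 2 \approx 12.477$. The multiplication by $S$ is bit-exact for a power of two, so no rounding occurs before the cast. The comparison is therefore exactly this inequality, apart from the round-to-nearest behaviour right at $2^{-10}$. That boundary case has measure zero under a continuous score distribution.

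\textbf{Bounding $m$.} In reverse iteration the sink block is processed last. Every pre-sink block therefore sees a running maximum $m \le \max_{i\le N-k_{\text{sink}}} s_i$. For the sink block itself, $m$ is the sink maximum, but the sink tokens' own $P$ values are near $1$, and the non-sink blocks have already been cast. So the event to control is only the pre-sink one.

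For this, I would use the Gaussian maximal inequality $\E[\max_i s_i] \le \sqrt{2\ln N}$ together with Borell--TIS concentration:
\begin{equation}
\Pr\big[\max_i s_i > \sqrt{2\ln N} + t\big] \le e^{-t^2/2}.
\end{equation}
This justifies $m \le \sqrt{2\ln N} + \mathcal{O}(1)$, where the $\mathcal{O}(1)$ is an explicit $t$ chosen with negligible failure probability.

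\textbf{Gaussian tail.} The running $m$ is monotone nondecreasing across blocks, and the theorem's hypothesis already caps it. Underflow of a fixed token with $s\sim\mathcal{N}(0,1)$ then has probability $\Phi(m-12.48)$, and by monotonicity of $\Phi$ I can substitute the upper bound on $m$. At $N = 10^6$, $\sqrt{2\ln N} \approx 5.26$, so the hypothesis $m \le \sqrt{2\ln N} + \mathcal{O}(1)$ gives $m - 12.48 \le -7.2$ as the stated numeric instance of the $\mathcal{O}(1)$ slack. The Mills-ratio bound $\Phi(-x) \le \phi(x)/x$ then gives
\begin{equation}
\Phi(-7.2) \le \frac{e^{-25.9}}{7.2\sqrt{2\pi}} \approx 3\times 10^{-13} < 10^{-12}.
\end{equation}

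\textbf{Main obstacle.} The delicate step is the dependence between $m$ and $s$. The running maximum over the blocks processed so far includes the current block, which contains $s$ itself. Because of this, $s$ is not an independent $\mathcal{N}(0,1)$ draw given $m$.

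I would handle it by conditioning on the event $\{m \le \sqrt{2\ln N} + t\}$. On that event, the underflow event $\{s < m - 12.48\}$ is contained in $\{s < \sqrt{2\ln N} + t - 12.48\}$, which involves only the marginal of $s$. Then:
\begin{itemize}
\item a union-style split bounds the total probability by $\Phi(\sqrt{2\ln N} + t - 12.48) + e^{-t^2/2}$;
\item choosing $t$ moderate, e.g.\ $t \approx 0$ to $1$, keeps the first term within the stated $\Phi(-7.2)$ regime, while the Borell--TIS term must separately be made small.
\end{itemize}
I expect the write-up to present this as a per-token probability under the stated bound on $m$. The honest caveat is that the $10^{-12}$ figure is conditional on that cap. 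Getting a comparably small unconditional failure probability would need $t$ of order $7$, which weakens the constant; this is the point I would flag.
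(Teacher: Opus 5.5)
Your proposal is correct and is essentially the paper's proof. Both take the same logarithm step giving $s < m - 18\ln 2$, then substitute the pre-sink cap $m \le \sqrt{2\ln N}$ at $N = 10^6$ into $\Phi(m - 12.48)$ for fixed $m$; the paper rounds to $m \le 5.3$ and quotes $\Phi(-7.18) \approx 3.5\times 10^{-13}$, while you use $5.26$ and the Mills ratio. The obstacle you flag, however, is an artifact of your split $\Phi(\sqrt{2\ln N} + t - 12.48) + e^{-t^2/2}$, which pays separately for ``$m$ large'' and ``$s$ small'' and cannot go below roughly $10^{-3}$ for any $t$ at $N = 10^6$ (so $t \approx 7$ does not merely weaken the constant, it destroys the bound); charging the two events jointly, with $s_j - s_i \sim \mathcal{N}(0,2)$, gives $\Pr[s_i < m - 18\ln 2] \le \sum_{j \neq i} \Pr[s_j - s_i > 18\ln 2] \le N\,\Phi(-18\ln 2/\sqrt{2}) < 6\times 10^{-13}$ for $N \le 10^6$, which is the $10^{-12}$ bound unconditionally, with no cap on $m$ and the dependence of $m$ on $s_i$ handled automatically.
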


\begin{proof}
The cast-to-zero condition is $p \cdot S < 2^{-10}$, i.e., $\exp(s - m) \cdot 256 < 2^{-10}$.
Taking logs: $s - m + 8\ln 2 < -10\ln 2$, hence $s < m - 18\ln 2$.
Numerically, $18\ln 2 = 12.4766$ (equivalently $10\ln 2 + \ln 256 = 6.9315 + 5.5452 = 12.4766$), giving the threshold $s < m - 12.48$.
Before the sink block, $m \leq \sqrt{2\ln N} + \mathcal{O}(1)$.
For $N = 10^6$: $m \leq 5.3$, so the threshold is $s < 5.3 - 12.48 = -7.18$.
For $s \sim \mathcal{N}(0,1)$: $\Phi(-7.18) \approx 3.5 \times 10^{-13} < 10^{-12}$.
\end{proof}

\subsection{The Final $\alpha$-Correction}

When the sink block is finally processed (last in reverse), the correction factor is:
\begin{equation}
\alpha = \exp(m_{\text{pre-sink}} - m_{\text{new}}) \approx \exp\big(\sqrt{2\ln N} - \Delta\big)
\end{equation}
For $\Delta = 7, N = 4096$: $\alpha \approx \exp(4.1 - 7) \approx 0.055$, which multiplies the FP32 accumulator (23-bit mantissa) with negligible precision loss. A paired $t$-test (Appendix~\ref{app:saturation}) confirms forward+$S{=}256$ and reverse+$S{=}256$ are indistinguishable where P-collapse dominates ($\Delta \le 9$); at $\Delta \ge 10$ reverse is marginally---and inconsequentially---better ($\sim 10^{-8}$ vs.\ MSE of $10^{-5}$--$10^{-6}$).

\section{Optimization 2: Scale Factor $S = 256$}
\label{sec:scale}

The complete P-quantization pipeline is:
\begin{equation}
\label{eq:pipeline}
\tilde{P} = \frac{\Round_{\text{E4M3}}(P \cdot S)}{S}, \qquad O = \frac{\tilde{P} \cdot V}{\ell}
\end{equation}
where $\Round_{\text{E4M3}}$ denotes round-to-nearest in E4M3.
The choice of $S$ controls the fidelity of $\tilde{P}$ as an approximation to $P$.

\subsection{Condition 1: Bit-Exact Scaling}

\begin{proposition}[Power-of-two bit-exactness]
\label{prop:bitexact}
For $S = 2^k$ and any IEEE 754 FP32 value $x$ (with result in representable range), both $x \times S$ and $x \times (1/S)$ are \emph{exact}---no rounding occurs.
\end{proposition}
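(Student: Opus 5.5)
The plan is to write $x$ in its IEEE 754 decomposition and show that multiplying by $2^{k}$ changes only the exponent field. The main work is to interpret ``with result in representable range'' correctly, because of subnormals.

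For a nonzero FP32 value $x$, write $x = (-1)^{\sigma}\, 2^{e}\, m$. In the normal case $m = 1.f_1\ldots f_{23}$ and $e \in [-126, 127]$. In the subnormal case $e = -126$ and $m = 0.f_1\ldots f_{23}$. Zero, infinities and NaN are trivial: the product is $\pm 0$, $\pm\infty$ or NaN exactly, by the IEEE special-value rules.

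A cleaner route avoids the normal/subnormal split at the input. Every finite FP32 value is an integer multiple of $2^{-149}$, namely $x = M \cdot 2^{q}$ with integer $|M| < 2^{24}$ and $q \ge -149$. FP32 is exactly the set of such numbers with $q \le 104$, where a number may carry several such representations and we need only one. Under this description:
\begin{itemize}
\item The exact real product is $x \cdot 2^{k} = M \cdot 2^{q+k}$, with the same integer significand $M$.
\item By the IEEE 754 correctness requirement, the computed product is $\mathrm{round}(x \cdot 2^{k})$. It therefore suffices to show that $M \cdot 2^{q+k}$ is itself an FP32 number, since rounding fixes representable values.
\item The significand $M$ has at most 24 bits, so the only issue is the exponent range. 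We need some representation with $q' \in [-149, 104]$ and $|M'| < 2^{24}$.
\end{itemize}

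Here lies the main obstacle: precisely what ``result in representable range'' means. Overflow is clear. The product is representable when $|x \cdot 2^{k}| \le (2 - 2^{-23})\, 2^{127}$, because we can normalize $M$ so that $2^{23} \le |M| < 2^{24}$ and then $q + k \le 104$ follows from the magnitude bound.

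Underflow is subtler. Scaling a normal number down into the subnormal range can drop low-order bits, so exactness genuinely requires that the trailing bits of $M$ survive, not merely that the magnitude be at least $2^{-149}$. I would therefore state the range hypothesis as: $x \cdot 2^{k}$ lies in the normal range $[2^{-126}, 2^{128})$ in magnitude, or more generally $x \cdot 2^{k}$ is an FP32 number. Under the normal-range hypothesis, normalize $M$ to 24 bits. Then $|x \cdot 2^{k}| \ge 2^{-126}$ forces $q + k \ge -149$, and the representation $M \cdot 2^{q+k}$ is valid.

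The same argument applies verbatim to $1/S = 2^{-k}$. First, $2^{-k}$ is itself exactly representable for $|k| \le 126$ (and down to $2^{-149}$), so the reciprocal involves no rounding. Then multiplication by $2^{-k}$ is the case $k \mapsto -k$.

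Finally, I would remark on the application. In Algorithm~\ref{alg:fp8attn}, $P \in (0, 1]$ and $S = 256$, so $P \cdot S \le 256$, well below overflow. Any $P$ that is not already below $2^{-126-8}$ stays normal, and such tiny values cast to zero in E4M3 regardless. Hence the hypothesis holds in practice, and the multiplications by $S$ and $1/S$ are exact.
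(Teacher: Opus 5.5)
Your proof is correct and rests on the same core fact as the paper's: multiplying by $2^{k}$ leaves the significand untouched and only shifts the exponent. The difference is in how you make that rigorous.

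The paper argues at the bit level. $2^k$ has a zero mantissa field, so the product just adds $k$ to the exponent field, and $2^{-k}$ is likewise exact. You instead describe FP32 as the set $\{M\cdot 2^{q} : |M|<2^{24},\ -149\le q\le 104\}$ and invoke correct rounding. Exactness then reduces to checking that $M\cdot 2^{q+k}$ lies in that set.

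Your route buys precision about the hypothesis. The paper's ``add $k$ to the exponent field'' mechanism is literally valid only for normal inputs producing normal outputs. For a subnormal input there is no implicit leading bit, so adding $k$ to a zero exponent field gives the wrong value, even though the true product is still representable. The paper's phrase ``result in representable range'' would also make the statement false if read as including subnormals: $(1+2^{-23})\,2^{-126}\times 2^{-1} = 2^{-127}+2^{-150}$ lies exactly halfway between two subnormals and must round. Your formulation covers subnormal inputs and states the right condition: a normal result, or more generally $M\cdot 2^{q+k}$ being an FP32 number. The paper's version buys brevity and a direct picture of what the kernel's $\times S$ does in hardware.

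One refinement your framework gives for free: when $k\ge 0$, any representation of $x$ already has $q\ge -149$, hence $q+k\ge -149$. So $x\cdot 2^{k}$ is exact whenever it does not overflow, and the underflow caveat concerns only the $\times(1/S)$ direction. Your closing argument about tiny $P$ is therefore not needed for the $P\cdot S$ step.
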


\begin{proof}
In IEEE 754 binary32, $2^k$ has exponent $= 127 + k$ and zero mantissa.
Multiplication by $2^k$ adds $k$ to the result exponent without modifying the 23-bit mantissa.
Since $1/2^k = 2^{-k}$ is also exactly representable, the reciprocal multiplication is likewise exponent-only.
\end{proof}

For non-power-of-two $S$ (e.g., 448): $1/448 = 0.002232\ldots$ is \emph{not} exactly representable in IEEE 754.
Every $\times(1/448)$ introduces $\sim 2^{-24}$ rounding per element.
Over the $d$-dimensional output, this accumulates to an error of $\mathcal{O}(d \cdot 2^{-48})$---negligible individually, but a systematic precision disadvantage nonetheless.

\subsection{Condition 2: The $dp(S)$ Sawtooth}

\begin{definition}[Normalized quantization step]
\label{def:dp}
For scale factor $S \in (0, 448]$:
\begin{equation}
dp(S) \triangleq \frac{\max_{x \in [0,\, S]} \LSB_{\text{E4M3}}(x)}{S}
\end{equation}
where $\LSB_{\text{E4M3}}(x)$ is the spacing between consecutive representable E4M3 values in the binade containing $x$.
For $S > 448$, the cast saturates at $448$ and we extend the definition to also include the (one-sided) saturation step, see Remark~\ref{rem:dp_overflow}.
\end{definition}

$dp(S)$ represents the worst-case quantization \emph{step} for any $P \in [0, 1]$ through the pipeline~\eqref{eq:pipeline}; the maximum pointwise error is $dp(S)/2$.

\begin{remark}[Overflow extension of $dp$]
\label{rem:dp_overflow}
For $S > 448$ the value $P\!\cdot\!S = S$ at $P = 1$ is clipped to $448$, producing a one-sided error $|1 - 448/S|$.
Converting this saturation error to an equivalent step (by the standard error-equals-step$/2$ correspondence) gives $2(1 - 448/S)$.
The effective worst-case step for $S > 448$ is therefore
\begin{equation}
dp(S) = \max\!\big(\tfrac{32}{S},\; 2(1 - 448/S)\big),
\end{equation}
where $32/S$ is the contribution of the largest-LSB binade $[256, 512)$ that lies inside $[0, 448]$.
We use this extended form only in Theorem~\ref{thm:dp}(iii); for $S \in (0, 448]$ the original definition applies unchanged.
\end{remark}

\begin{theorem}[$dp(S)$ structure]
\label{thm:dp}
\begin{enumerate}[label=(\roman*),leftmargin=*]
\item For all $S = 2^k$ with integer $k \in \{0, 1, \ldots, 8\}$ (i.e.\ $S \in \{1, 2, 4, \ldots, 256\}$):\; $dp(2^k) = 2^{-4}$.
\item For all $S \in [2^{-6}, 448]$ with $S \neq 2^k$:\; $dp(S) > 2^{-4}$.
\item For $S > 448$ (using the extended form of Remark~\ref{rem:dp_overflow}):\; $dp(S) > 2^{-4}$.
\end{enumerate}
\end{theorem}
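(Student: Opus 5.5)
The plan is to compute $dp(S)$ explicitly on each binade and then compare it with $2^{-4}$. First I fix the reading of Definition~\ref{def:dp} that the statement requires. Every normal binade $[2^n,2^{n+1})$ with $n\ge -6$ has $\LSB_{\text{E4M3}}=2^{n-3}$, and the subnormal range below $2^{-6}$ has the smaller spacing $2^{-9}$. So $x\mapsto \LSB_{\text{E4M3}}(x)$ is nondecreasing and piecewise constant, and it jumps exactly at powers of two.

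For $S=2^k$ the endpoint $x=S$ is itself exactly representable, so a $P\cdot S$ value there incurs no quantization step. Only the half-open range $[0,S)$ contributes to the worst-case step. I will adopt this convention, taking the maximum over $[0,S)$, equivalently the left limit of the LSB at $x=S$. It is the one consistent with the claimed value $2^{-4}$: if the closed-interval reading used the LSB of the binade starting at $S=2^k$, it would give $2^{-3}$ instead. I will state this convention explicitly at the start of the proof.

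\emph{Part (i).} For $S=2^k$ with $0\le k\le 8$, the largest binade meeting $[0,S)$ is $[2^{k-1},2^k)$. Its LSB is $2^{k-4}$, so $dp(2^k)=2^{k-4}/2^k=2^{-4}$. The range $k\le 8$ guarantees this binade lies inside the normal range: the top case is $[128,256)$ with LSB $16$, giving $16/256$.

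\emph{Part (ii).} Take $S\in[2^{-6},448]$ that is not a power of two. Then $S\in(2^n,2^{n+1})$ for a unique integer $n$ with $-6\le n\le 8$. The point $S$ lies in the interior of binade $n$, so by monotonicity the maximum of the LSB over $[0,S)$ is $2^{n-3}$. Hence $dp(S)=2^{n-3}/S>2^{n-3}/2^{n+1}=2^{-4}$, strictly, because $S<2^{n+1}$. This gives the sawtooth picture: on each binade $dp$ decreases like $2^{n-3}/S$ from $2^{-3}$ down toward $2^{-4}$. The value $2^{-4}$ is approached only as $S\to 2^{n+1}$ and is attained only at the power of two itself, where the relevant LSB drops back to that of the previous binade. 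For $n=8$ the interval is truncated at $448<512$, so the bound there is $dp\ge 32/448>2^{-4}$.

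\emph{Part (iii).} Here I use the extended form from Remark~\ref{rem:dp_overflow}, $dp(S)=\max\bigl(32/S,\,2(1-448/S)\bigr)$, and split at $S=512$. For $448<S<512$, the first term gives $32/S>32/512=2^{-4}$. For $S\ge 512$, the second term gives $2(1-448/S)\ge 2(1-7/8)=1/4>2^{-4}$.

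The only real obstacle is the endpoint convention at $S=2^k$; everything else is a direct computation. I will address it by arguing that $x=S$ is exactly representable and therefore contributes zero error. This justifies the half-open maximum in part (i). Part (ii) is unaffected by the convention, because there $S$ is interior to its binade.
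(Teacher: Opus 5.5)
Your proposal is correct and follows the paper's proof essentially step for step: the top-binade computation $dp(S)=2^{n-3}/S$ for (i) and (ii), and the two-term case split of the extended form for (iii). Your explicit half-open convention at $S=2^k$ is what the paper uses implicitly when it invokes exact representability of $S$ and takes $[2^{k-1},2^k)$ as the top binade, and your split at $S=512$ (where $2(1-448/S)\ge 1/4$) is a slightly cleaner version of the paper's overlapping-interval argument.
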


\begin{proof}
\textbf{(i)} For $S = 2^k$ with $k \in \{0,\ldots,8\}$, we have $S \le 256 < 448$, so $S$ itself is exactly representable in E4M3.
The mapped range $[0, S] = [0, 2^k]$ has its highest binade $[2^{k-1}, 2^k)$ entirely in the normal region (since $k - 1 \ge -1 > -6$), with $\LSB = 2^{k-4}$.
Thus $dp(2^k) = 2^{k-4}/2^k = 2^{-4}$.

\textbf{(ii)} Let $n = \floor{\log_2 S}$, so $S \in [2^n, 2^{n+1})$ with $S \neq 2^n$ (since $S$ is not a power of two).
Because $S \ge 2^{-6}$, the binade $[2^n, 2^{n+1})$ lies in the normal region, so it has $\LSB = 2^{n-3}$.
The range $[0, S]$ overlaps with this binade, hence $dp(S) = 2^{n-3}/S$.
Since $S < 2^{n+1}$, $dp(S) > 2^{n-3}/2^{n+1} = 2^{-4}$.\footnote{Equivalently, $dp(S) = 2^{n-3}/S < 2^{n-3}/2^n = 2^{-3}$, so $dp(S) \in (2^{-4},\, 2^{-3})$ for non-power-of-two $S \in [2^{-6}, 448]$. The range $S < 2^{-6}$ (entirely subnormal) is excluded for cleanliness; the conclusion still holds there since the uniform subnormal LSB $2^{-9}$ gives $dp(S) = 2^{-9}/S > 2^{-3}$, but the practical scope of this paper is $S \ge 1$.}

\textbf{(iii)} For $S > 448$, both terms in $\max(32/S, 2(1-448/S))$ exceed $2^{-4}$ over disjoint sub-ranges: $32/S > 2^{-4}$ for $S < 512$, and $2(1 - 448/S) > 2^{-4}$ for $S > 448 \cdot \frac{1}{1 - 2^{-5}} \approx 462.5$; the two intervals overlap on $(448, 512)$, covering the full $S > 448$ regime.
\end{proof}

Figure~\ref{fig:dp} shows $dp(S)$ computed by exact enumeration of all 126 positive E4M3 values, confirming the sawtooth structure and power-of-two lower envelope.

\begin{figure}[t]
\centering
\includegraphics[width=\columnwidth]{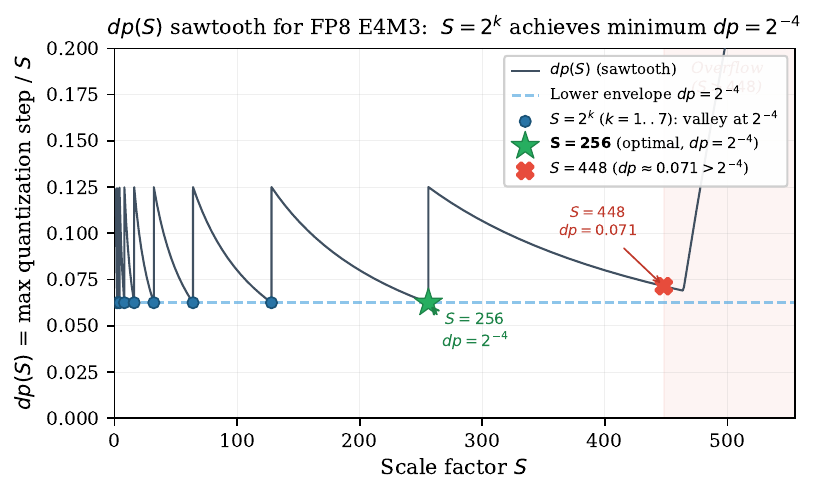}
\caption{The normalized quantization step $dp(S)$ for E4M3.
All $2^k$ values (blue) sit on the lower envelope at $dp = 2^{-4}$.
$S = 256$ (green star) is the rightmost before overflow.
$S = 448$ (red) sits $14\%$ above the envelope.}
\label{fig:dp}
\end{figure}

\subsection{Condition 3: Maximum Normal Coverage}

Among the power-of-two candidates satisfying $S \leq 448$ (i.e., $S \in \{1, 2, 4, \ldots, 256\}$), larger $S$ is better because the E4M3 \emph{normal-region lower bound} in the P-domain scales as $2^{-6}/S$:

\begin{center}
\small
\begin{tabular}{@{}lcc@{}}
\toprule
Scale & P normal threshold & Coverage \\
\midrule
$S = 64$ & $2^{-6}/64 = 2^{-12} \approx 2.4 \times 10^{-4}$ & Baseline \\
$S = 128$ & $2^{-6}/128 = 2^{-13} \approx 1.2 \times 10^{-4}$ & $2\times$ better \\
\rowcolor{lightgreen}
$S = 256$ & $2^{-6}/256 = 2^{-14} \approx 6.1 \times 10^{-5}$ & $4\times$ better \\
$S = 512$ & \multicolumn{2}{c}{Overflow (exceeds 448)} \\
\bottomrule
\end{tabular}
\end{center}

\noindent This monotonicity does \emph{not} stop at $256$: a non-power-of-two scale such as $S = 448$ pushes the normal threshold down further to $2^{-6}/448 \approx 3.5 \times 10^{-5}$, i.e.\ \emph{strictly better} coverage than $S = 256$ ($6.1 \times 10^{-5}$). Coverage alone therefore favors $448$; $256$ wins only once we additionally require bit-exactness (C1) and the minimum quantization step (C2), both of which $448$ fails. The $256$-vs-$448$ choice is thus a genuine trade-off (smaller worst-case step vs.\ slightly better deep-tail coverage), resolved empirically in \S\ref{sec:experiments}---not a clean domination.

\subsection{Characterization of $S = 256$}

\begin{theorem}[Characterization of $S = 256$]
\label{thm:256}
Restricted to scales $S \geq 1$ (so that the cast \emph{expands} the dynamic range of $P \in [0,1]$ rather than contracting it), $S = 256 = 2^8$ is the unique value satisfying all three of:
\begin{enumerate}[label=\textup{(C\arabic*)},leftmargin=*]
\item $S = 2^k$ \quad (bit-exact scaling; Proposition~\ref{prop:bitexact})
\item $dp(S) = 2^{-4}$ \quad (minimum quantization step; Theorem~\ref{thm:dp})
\item $S = \max\{2^k : 2^k \leq 448\}$ \quad (maximum normal coverage \emph{among $2^k$ scales})
\end{enumerate}
\end{theorem}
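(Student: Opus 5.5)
The plan is to prove existence and uniqueness separately, intersecting the three conditions inside the admissible set $S \ge 1$. Under (C1) the candidates are $S = 2^k$ with $k \ge 0$ an integer, since $2^k \ge 1$ forces $k \ge 0$. The order I would follow is: first use (C3) to pin down $k$; then check that this value also satisfies (C2); finally note that (C3) alone already excludes every other power of two, which gives uniqueness.

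\textbf{Step 1 (C3).} The set $\{2^k : 2^k \le 448\}$ with $k \ge 0$ is $\{1, 2, \ldots, 256\}$. This holds because $2^8 = 256 \le 448 < 512 = 2^9$. Its maximum is therefore $256$, and (C3) is satisfied by exactly $S = 256$ within the $2^k$ family. The coverage motivation, that the normal threshold $2^{-6}/S$ is strictly decreasing in $S$, is only interpretive and is not needed for the logical statement. I would record it to justify why ``max'' encodes ``maximum normal coverage.''

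\textbf{Step 2 (C2).} Invoke Theorem~\ref{thm:dp}(i) with $k = 8$ to get $dp(256) = 2^{-4}$. This requires $256 \le 448$ so that the unextended definition applies and $256$ is exactly representable in E4M3. For completeness I would also note what (C2) contributes on its own over the range $S \ge 1$. By Theorem~\ref{thm:dp}(ii), every non-power-of-two $S \in [1, 448]$ has $dp(S) > 2^{-4}$. By Theorem~\ref{thm:dp}(iii), every $S > 448$ has $dp(S) > 2^{-4}$. Hence (C2) alone restricts $S \ge 1$ to $\{1, 2, \ldots, 256\}$, which coincides with the (C1)+bound set. This shows (C1) and (C2) are consistent, and that (C2) independently rules out scales such as $448$ and $512$.

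\textbf{Step 3 (uniqueness and the main obstacle).} Any $S \ge 1$ satisfying all three conditions must be a power of two by (C1), and must equal $256$ by Step 1. Conversely, $S = 256$ satisfies (C1) trivially, (C2) by Step 2, and (C3) by Step 1. The mathematics is elementary. The step that needs the most care is bookkeeping about the domain of $dp$: I must make sure Theorem~\ref{thm:dp}(i) is applied only in the regime $S \le 448$, and that the extended form of Remark~\ref{rem:dp_overflow} covers the powers of two $S \ge 512$. For $S = 512$ the extended form gives $dp = \max(1/16, 2 \cdot 1/8) = 1/4 > 2^{-4}$, so such values genuinely fail (C2) and not merely (C3).

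I would also state plainly that the hypothesis $S \ge 1$ is what excludes $k < 0$. For $k \in \{-6, \ldots, -1\}$, the argument of Theorem~\ref{thm:dp}(i) still gives $dp = 2^{-4}$, since the top binade stays normal. Without the restriction, those scales would satisfy (C1) and (C2), although (C3) would still single out $256$.
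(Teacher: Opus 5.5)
Your proof is correct and is essentially the paper's: (C1) with $S\ge 1$ gives $S=2^k$, $k\ge 0$; Theorem~\ref{thm:dp}(i),(iii) show (C2) holds exactly for $k\le 8$; and (C3) selects $k=8$. You apply (C3) before checking (C2), which matches the paper's own remark that (C3) is the discriminating condition.

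One slip in your closing aside, which does not affect the proof: $k=-6$ does \emph{not} satisfy (C2). The top binade $[2^{-7},2^{-6})$ lies in the subnormal range with spacing $2^{-9}$, so $dp(2^{-6})=2^{-3}$. Your extension of Theorem~\ref{thm:dp}(i) to negative exponents therefore holds only for $k\in\{-5,\ldots,-1\}$.
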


\begin{proof}
By (C1), $S = 2^k$ for some integer $k \geq 0$ (using the standing $S \geq 1$ assumption).
By Theorem~\ref{thm:dp}, the set $\{2^k : k \geq 0\}$ partitions into $\{2^0,\ldots,2^8\}$ where $dp = 2^{-4}$ and $\{2^k : k \geq 9\}$ where $dp > 2^{-4}$ (Theorem~\ref{thm:dp}(iii)).
Thus (C2) admits exactly the nine candidates $\{1, 2, 4, \ldots, 256\}$.
(C3) selects the maximum element of this set, $k = 8$, giving $S = 256$.
\end{proof}

\begin{remark}[Discriminating power of the conditions]
Of the three conditions, (C3) alone---``pick the largest $2^k \leq 448$''---uniquely identifies $S = 256$ among power-of-two scales.
(C1) and (C2) jointly leave nine candidates and merely articulate \emph{why} non-$2^k$ scales and $2^k$ with $k \geq 9$ are dominated.
The value of (C2) is therefore explanatory rather than discriminating: it makes the lower-envelope structure of the $dp$ sawtooth (Figure~\ref{fig:dp}) explicit, and shows that the bit-exactness preference (C1) does not come at the cost of quantization-step optimality.
Finally, (C3) is a \emph{within-$2^k$} statement: if bit-exactness is dropped, $S = 448$ achieves strictly larger normal coverage (\S\ref{sec:scale}). The optimality claimed here is thus for the bit-exact family, and the practical $256$-vs-$448$ gap is the empirical $dp$ (worst-case step) effect, not a coverage advantage.
\end{remark}

\paragraph{Comparison: $S = 256$ vs $S = 448$.}
$dp(448) = 32/448 \approx 0.0714$ exceeds $dp(256) = 0.0625$ by $14\%$.
In MSE terms: $(0.0714/0.0625)^2 \approx 1.30$, predicting $\sim$30\% higher MSE for $S = 448$.
Our experiments (\S\ref{sec:experiments}) measure 10--15\% MSE difference, consistent with the bound (the bound is worst-case; average-case is milder, and $448$'s better deep-tail coverage partly offsets its larger worst-case step).

\section{Experimental Validation}
\label{sec:experiments}

\subsection{Kernel-Faithful Simulation}
\label{sec:sim_model}

We simulate the FP8 attention kernel with semantics matched to production code (specifically, the \texttt{attention\_with\_kvcache\_prefill\_fp8} kernel in hpc-ops and FlashAttention-3's Hopper/Blackwell backend):

\begin{itemize}[leftmargin=*,itemsep=1pt]
\item $\ell$ (running sum): FP32, from \emph{pre-cast} P (line 8 of Alg.~\ref{alg:fp8attn}).
\item $O$ (output): FP32 accumulator, from \emph{post-cast} $P_{\text{fp8}} \times V$.
\item P-cast: round-to-nearest E4M3 (values below $2^{-10}$ $\to$ 0).
\item Epilogue: $O_{\text{final}} = O / (S \cdot \ell)$.
\end{itemize}

To \emph{isolate} the P-cast effect, we keep $Q, K, V$ in FP32.
The QKV FP8 quantization is an orthogonal concern (handled by separate qscale/kscale/vscale) and does not interact with the P-cast precision loss.

\subsection{Configurations}

We compare five configurations spanning the design space:

\begin{center}
\small
\begin{tabular}{@{}llc@{}}
\toprule
Label & Parameters & Matches \\
\midrule
Forward, $S{=}1$ & fwd, no scale & hpc-ops (prior) \\
Reverse, $S{=}1$ & rev, no scale & --- \\
Forward, $S{=}448$ & fwd, max\_normal & FlashInfer/TRT \\
Forward, $S{=}256$ & fwd, $2^8$ & \textbf{This work} \\
Reverse, $S{=}256$ & rev, $2^8$ & FA3/4 \\
\bottomrule
\end{tabular}
\end{center}

\subsection{Experiment 1: Sink Strength Sweep}

We sweep $\Delta \in [4, 13]$ with $N = 4096$, $d = 128$, $q_{\text{len}} = 32$, $B = 64$, $k_{\text{sink}} = 4$, over 20 seeds.
Figure~\ref{fig:delta} shows both the MSE results and the underlying physics (P-zeroing fraction and mass at risk).

\begin{figure}[t]
\centering
\includegraphics[width=\columnwidth]{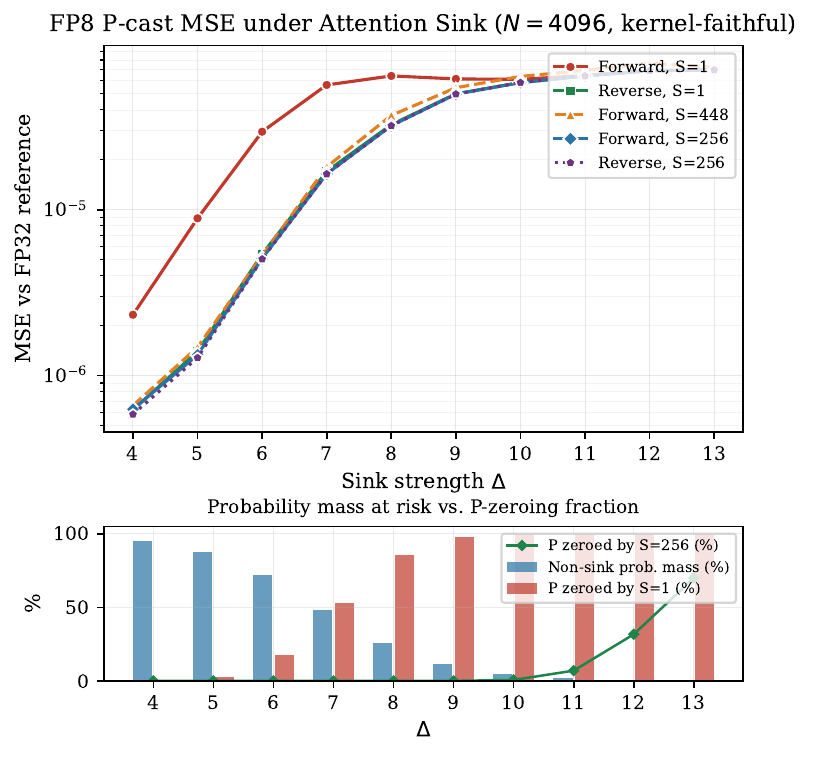}
\caption{\textbf{Top:} MSE vs.\ sink strength. Forward+S=1 exhibits a peak at $\Delta=7$--$8$ (the transition region). All other configurations remain at the quantization-noise floor.
\textbf{Bottom:} Diagnostic---non-sink probability mass (blue bars) and fraction of P values zeroed by S=1 cast (red bars) vs.\ S=256 cast (green line).}
\label{fig:delta}
\end{figure}

\paragraph{Key findings:}
\begin{enumerate}[leftmargin=*,itemsep=1pt]
\item Forward+S=1 is $3.4\times$ worse than optimized configurations at $\Delta = 7$.
\item Both reverse (any $S$) and $S = 256$ (any direction) independently fix the issue.
\item $S = 256$ gives $\sim$10--15\% lower MSE than $S = 448$, consistent with the $dp$ ratio.
\item At $\Delta \geq 11$, all configurations converge (non-sink mass $< 2\%$).
\end{enumerate}

\subsection{Experiment 2: Sequence Length Sweep}

At $\Delta = 7$, we sweep $N \in [512, 16384]$.
As $N$ grows, the non-sink probability mass increases, amplifying the P-collapse effect.
Figure~\ref{fig:seqlen} shows the results.

\begin{figure}[t]
\centering
\includegraphics[width=\columnwidth]{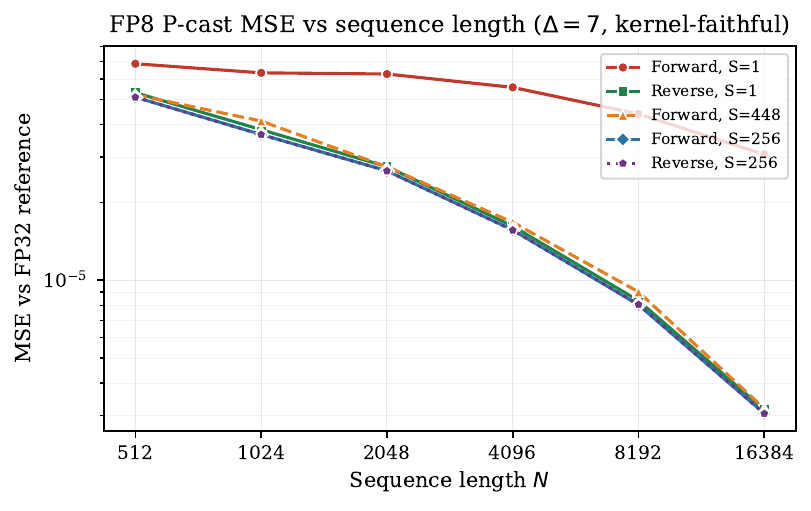}
\caption{MSE vs.\ sequence length at $\Delta = 7$.
The improvement of Forward+S=256 over Forward+S=1 grows from $1.3\times$ ($N{=}512$) to $10\times$ ($N{=}16384$) as non-sink probability mass increases.}
\label{fig:seqlen}
\end{figure}

\subsection{Numerical Results}

Table~\ref{tab:results} reports exact MSE values for key operating points:

\begin{table}[h]
\centering
\caption{MSE ($\times 10^{-5}$) at selected configurations.}
\label{tab:results}
\small
\begin{tabular}{@{}l|ccc@{}}
\toprule
Configuration & $\Delta{=}7$ & $\Delta{=}7$ & $\Delta{=}7$ \\
 & $N{=}4096$ & $N{=}8192$ & $N{=}16384$ \\
\midrule
Forward, $S{=}1$ & 5.65 & 4.40 & 2.94 \\
Reverse, $S{=}1$ & 1.70 & 0.83 & 0.32 \\
Forward, $S{=}448$ & 1.81 & 0.90 & 0.32 \\
\rowcolor{lightgreen}
Forward, $S{=}256$ & 1.64 & 0.80 & 0.28 \\
Reverse, $S{=}256$ & 1.64 & 0.81 & 0.28 \\
\midrule
Ratio (Fwd$_{S=1}$ / best) & $3.4\times$ & $5.5\times$ & $10.5\times$ \\
\bottomrule
\end{tabular}
\end{table}

\subsection{Comparison with Mainstream Implementations}

Table~\ref{tab:impls} positions the implementations in terms of both design choices.

\begin{table}[h]
\centering
\caption{Design choices in production FP8 attention kernels.
``Optimal'' denotes joint satisfaction of (C1)--(C3) of Theorem~\ref{thm:256} \emph{and} reverse iteration (which is redundant given $S{=}256$ but offers belt-and-suspenders robustness against multi-sink patterns).
``Near-optimal'' denotes one fix applied (here, reverse iteration eliminates P-collapse) but a non-power-of-two scale incurs the $dp(448)/dp(256)\!=\!1.14$ residual penalty (\S\ref{sec:scale}).
``Suboptimal scale'' denotes the same scale issue \emph{without} the iteration-order safety net.}
\label{tab:impls}
\small
\begin{tabular}{@{}lcccc@{}}
\toprule
Implementation & Order & $S$ & $2^k$? & Status \\
\midrule
FlashAttention-3/4 & Rev & 256 & \checkmark & Optimal \\
hpc-ops (updated) & Fwd & 256 & \checkmark & Optimal \\
hpc-ops (prior) & Fwd & 1 & \checkmark & $S{=}1$ baseline \\
FlashInfer & Rev & 448 & $\times$ & Near-optimal \\
TensorRT-LLM XQA & Fwd & 448 & $\times$ & Suboptimal scale \\
SageAttention2 & Fwd & 448 & $\times$ & Suboptimal scale \\
\bottomrule
\end{tabular}
\end{table}

\noindent Based on this analysis, we have adopted $S = 256$ in the hpc-ops codebase.
FlashAttention-3/4 and the updated hpc-ops both satisfy the optimality conditions; FlashInfer and TensorRT-LLM XQA could gain an additional 10--15\% precision improvement by switching from $S = 448$ to $S = 256$.

\section{Discussion}
\label{sec:discussion}

\subsection{Saturation: Same Mechanism, Same Floor}

Both optimizations address the \emph{identical} failure mode: P values falling below E4M3's representable range.
Once either fix is applied, P-collapse is eliminated and the residual MSE is set by the \emph{inherent} E4M3 quantization noise on representable values.
Paired $t$-tests over 100 instances confirm that Forward+S=256 and Reverse+S=256 are statistically indistinguishable wherever P-collapse is active ($\Delta \le 9$); for $\Delta \ge 10$ the residuals diverge with reverse marginally better, but the absolute gap is $\sim 10^{-8}$, three orders of magnitude below the MSE itself, so the practical conclusion is unchanged (Appendix~\ref{app:saturation}).

This has a practical implication: \emph{either optimization alone suffices}.
The choice between them should be driven by engineering constraints (reverse may simplify causal mask handling; forward may offer better memory access patterns).

\subsection{Structural Interpretation}

The role of $S = 2^k$ admits a simple geometric reading: every E4M3 normal binade $[2^n, 2^{n+1})$ contains 8 equispaced points, differing only in absolute scale.
The quantity $dp(S) = \LSB(S)/S$ measures the \emph{normalized} granularity; at $S = 2^k$, both numerator and denominator double simultaneously, locking their ratio at $2^{-4}$.
Non-power-of-two scales break this alignment, leaving $dp(S)$ on the rising portion of the sawtooth (Figure~\ref{fig:dp}).
The P-collapse threshold $\Delta_c = 6.93 + \ln S - \delta_k$ corresponds to the sink strength at which the median non-sink $P$ value falls below the round-to-zero boundary $2^{-10}/S$; below this threshold the cast is well-conditioned, above it most of the non-sink mass is silently zeroed.

\subsection{Practical Recommendations}

\begin{enumerate}[leftmargin=*,itemsep=2pt]
\item \textbf{For forward-order kernels (e.g., TRT-LLM XQA):} Add $S = 256$ P-scaling.
Implementation: multiply P by 256 before cast; divide output by 256 in epilogue.
Both operations are bit-exact (\S\ref{sec:scale}) and add only two FMAs whose latency is negligible against the tensor-core matmul.
We have applied this change to the hpc-ops kernel.
\item \textbf{For kernels already using $S = 448$ (FlashInfer, SageAttention2):} Switch to $S = 256$ for 10--15\% MSE reduction at negligible cost.
\item \textbf{For kernels already using reverse + $S = 256$ (FA3/4):} No change needed---this is already optimal among static-scale strategies.
\end{enumerate}

\subsection{Limitations and Future Work}

\textbf{Worst-case vs.\ average-case.}
The $dp(S)$ analysis provides the minimax-optimal scale assuming $P$ uniformly spans $[0, 1]$.
In practice, the post-softmax distribution is highly non-uniform (a few large values, many small).
A \emph{dynamic} per-block scale (still constrained to $2^k$ for bit-exactness) could yield better average-case precision by adapting to the per-block $P$ distribution.
Even so, for typical small $P$ ($\sim$0.01--0.05) the static $S = 256$ maps into mid-range binades with competitive relative precision, and its no-underflow guarantee remains the main advantage.

\textbf{Interaction with QKV quantization.}
We isolated the P-cast effect by keeping $Q, K, V$ in FP32.
In production, these are also in FP8, contributing an additive noise floor.
Since both reference and kernel use identical dequantized QKV values, the P-cast MSE signal is preserved, but the \emph{relative} improvement ratio is compressed at high delta where QKV noise dominates.

\textbf{No end-to-end metric.}
All our numbers are isolated output MSE against an FP32 reference; we report no perplexity or task-accuracy. The $3$--$10\times$ figures show the P-cast error is real and removable, but \emph{not} that removing it shifts a downstream metric---where the FP8 QKV floor dominates, the end-to-end benefit may be small. We thus frame the $S{=}256$ switch as zero-cost and strictly-no-worse rather than a guaranteed quality win, to be confirmed per model.

\textbf{Scope of the $dp(S)$ analysis.}
The $dp(S)$ construction assumes (a)~a bounded range (here $P \in [0,1]$), (b)~one static scale per cast, and (c)~a bit-exactness requirement. These break down elsewhere: MXFP4's E8M0 block exponent already forces scales to $2^k$ (removing the freedom $dp(S)$ addresses), and outlier-heavy activations need per-channel smoothing first. We thus make no claims beyond E4M3 P-cast.

\textbf{Multi-sink patterns.}
Our analysis assumes a single sink cluster at position 0.
Some models exhibit distributed sink patterns across the sequence; the P-collapse analysis generalizes straightforwardly by replacing $\Delta$ with the per-block maximum gap.

\section{Conclusion}
\label{sec:conclusion}
\begingroup\samepage

We have analyzed two implementation-level precision considerations for FP8 E4M3 attention:
(1)~P-collapse under Attention Sink is a quantifiable threshold effect---activating around $\Delta \approx 6$--$7$ with leading-order underflow fraction $F = \Phi(\Delta + \delta_k - 6.93 - \ln S)$, where $\delta_k$ is the within-sink extreme-value shift---that both reverse iteration and $S = 256$ scaling independently eliminate;
(2)~the $dp(S)$ sawtooth, defined over the E4M3 number line, characterizes $S = 256$ via three conditions---bit-exact arithmetic, the lower-envelope minimum step ($dp = 2^{-4}$), and the largest normal-range coverage among $2^k$ scales.
Kernel-faithful experiments measure $3$--$10\times$ MSE improvement at the critical transition region ($\Delta = 5$--$9$, $N = 4096$--$16384$), and paired statistical tests show that both optimizations reach the same precision floor when combined.

For practitioners, the actionable recommendation is simple: for any FP8 attention kernel currently using $S = 1$ (direct cast) or $S = 448$ (max-normal), switching to $S = 256$ is a single-constant, bit-exact change that removes P-collapse and is never worse on P-cast MSE; whether it moves an end-to-end metric should be confirmed per model (\S\ref{sec:discussion}).
We make no transfer claims to other formats (MXFP4, FP4) or per-channel quantization, where the preconditions differ (\S\ref{sec:discussion}).
\endgroup

\vspace{-2mm}
{\small
\bibliographystyle{plainnat}
\bibliography{references}
}

\appendix

\section{Complete $dp(S)$ Verification}
\label{app:dp}

We verify Theorem~\ref{thm:dp} by exhaustive computation over all 126 positive E4M3 values: for each $S$, take the binade with maximum LSB overlapping $[0,S]$ and compute $dp(S) = \text{LSB}_{\max}/S$.

\vspace{1mm}
\noindent\textbf{Power-of-two scales} ($k = 0, 1, \ldots, 8$): all give $dp = 2^{-4} = 0.0625$ exactly. Top binade is $[2^{k-1}, 2^k)$ with LSB $= 2^{k-4}$. For $k = 9$ ($S = 512$): overflow, $dp = 0.25$.

\vspace{1mm}
\noindent\textbf{Non-power-of-two scales}: $dp(3) = 0.083$, $dp(100) = 0.080$, $dp(250) = 0.064$, $dp(300) = 0.107$, $dp(448) = 0.071$. All strictly exceed $2^{-4}$.
See \texttt{experiments/proof\_mathematical.py} for full enumeration.

\section{Saturation Statistical Verification}
\label{app:saturation}

We test $H_0$: MSE(Fwd+$S{=}256$) $=$ MSE(Rev+$S{=}256$) via paired $t$-test (100 instances per condition).

\begin{center}
\footnotesize
\begin{tabular}{@{}ccrcc@{}}
\toprule
$\Delta$ & $N$ & Mean diff & $t$ & Sig? \\
\midrule
7 & 4k & $-7{\times}10^{-9}$ & $-1.0$ & No \\
7 & 16k & $-3{\times}10^{-9}$ & $-1.1$ & No \\
10 & 16k & $-2{\times}10^{-8}$ & $-4.4$ & Yes$^*$ \\
12 & 16k & $-1{\times}10^{-8}$ & $-3.6$ & Yes$^*$ \\
\bottomrule
\end{tabular}
\end{center}
\noindent${}^*$Where significant, reverse is marginally \emph{better} (not worse). All differences are $\sim\!10^{-8}$, negligible vs.\ MSE values of $10^{-5}$--$10^{-6}$, confirming algorithmic equivalence.

\end{document}